\tikzstyle{arrow} = [thick,->,>=stealth]
\renewcommand{\abstract}[1]{
 \centerline{
 \begin{minipage}{0.7\linewidth}
 \hrule
 \vskip 0.1in
  \begin{center}
    {\bf Abstract}
  \end{center}
  #1
 \vskip 0.1in
 \hrule
 \end{minipage}}
 \vskip 0.3in}
\newtheorem{theorem}{Theorem}
\numberwithin{equation}{section}
\providecommand{\I}{\mathbf{I}}
\renewcommand{\u}{\mathbf{u}}
\providecommand{\W}{\mathbf{W}}
\providecommand{\x}{\mathbf{x}}
\providecommand{\X}{\mathbf{X}}
\providecommand{\y}{\mathbf{y}}
\providecommand{\bb}{\boldsymbol{\beta}}
\providecommand{\bh}{\hat{\beta}}
\providecommand{\bbh}{\hat{\boldsymbol{\beta}}}
\providecommand{\be}{\boldsymbol{\eta}}
\providecommand{\beh}{\hat{\boldsymbol{\eta}}}
\providecommand{\bg}{\boldsymbol{\gamma}}
\providecommand{\lam}{\lambda}
\providecommand{\FD}{\textrm{FD}}
\providecommand{\mFDR}{\textrm{mFDR}}
\providecommand{\zero}{\mathbf{0}}
\providecommand{\abs}[1]{\left\lvert#1\right\rvert}
\providecommand{\al}[2]{\begin{align}\label{#1}#2\end{align}}
\providecommand{\as}[1]{\begin{align*}#1\end{align*}}
\providecommand{\als}[2]{\begin{align}\label{#1}\begin{split}#2\end{split}\end{align}}
\providecommand{\inP}{\overset{p}{\longrightarrow}}
\providecommand{\inD}{\overset{d}{\longrightarrow}}
\newcommand\independent{\protect\mathpalette{\protect\independenT}{\perp}}
\def\independenT#1#2{\mathrel{\rlap{$#1#2$}\mkern2mu{#1#2}}}
\providecommand{\cor}{\textrm{Cor}}
\title{Marginal false discovery rate control for likelihood-based penalized regression models}
\author{Ryan Miller\\Department of Biostatistics\\University of Iowa
  \and
  Patrick Breheny\\Department of Biostatistics\\University of Iowa}
\date{\today}
\begin{document}

\maketitle

\abstract{The popularity of penalized regression in high-dimensional data analysis has led to a demand for new inferential tools for these models. False discovery rate control is widely used in high-dimensional hypothesis testing, but has only recently been considered in the context of penalized regression.  Almost all of this work, however, has focused on lasso-penalized linear regression. In this paper, we derive a general method for controlling the marginal false discovery rate that can be applied to any penalized likelihood-based model, such as logistic regression and Cox regression. Our approach is fast, flexible and can be used with a variety of penalty functions including lasso, elastic net, MCP, and MNet.  We derive theoretical results under which the proposed method is valid, and use simulation studies to demonstrate that the approach is reasonably robust, albeit slightly conservative, when these assumptions are violated.  Despite being conservative, we show that our method often offers more power to select causally important features than existing approaches.  Finally, the practical utility of the method is demonstrated on gene expression data sets with binary and time-to-event outcomes.}

\section{Introduction}

High-dimensional data poses a challenge to traditional likelihood-based modeling approaches.  Penalized regression, which can provide sparse models in which only a subset of the available features have non-zero coefficients, is an increasingly popular approach that is well suited to handle high-dimensional data.  Inferential methods for controlling the error rates of variable selection for these methods, however, have been limited, especially outside the least-squares setting.

In this manuscript, we build upon the recently proposed idea of marginal false discovery rates and extend them to the more general class of likelihood-based models, which includes generalized linear models such as logistic regression as well as Cox proportional hazards models.  Our presentation focuses mainly on the most popular penalized regression method, the least absolute shrinkage and selection operator, or lasso \citep{tibshirani_1996}, but the methods we develop apply to many other penalties, including the minimax concave penalty \citep[MCP; ][]{MCP}, the smoothly clipped absolute deviations \citep[SCAD; ][]{SCAD}, and the elastic net \citep{Elastic_Net}.

There is an enormous literature on false discovery rate control as it applies to separately testing each individual feature one-at-a-time, but the issue is much more complex in the regression setting.  A fundamental challenge of inference under variable selection is how to account for using the same data to both select features as well as to fit the model.  Nevertheless, several approaches to address this challenge have been proposed.

One approach, which we refer to as {\em sample splitting}, is based upon dividing the data into two parts, using the first part for variable selection and the second part for inference. \citet{Sample_Splitting} first proposed this approach using single split, and \citet{Meinshausen2009} extended it by considering multiple random splits.  More recently, \citet{CovTest} and \citet{Selective_Inference} have proposed a family of methods known as {\em selective inference}, which test the significance of each variable selection along the lasso solution path as $\lambda$ is decreased, conditional upon the other variables already active in the model. Based on this sequence of tests, formal stopping rules can be derived in order to control the false discovery rate at a specified level \citep{GSell2016}.  Several other approaches exist -- e.g., procedures based on the bootstrap \citep{Dezeure2017} or the idea of de-biasing \citep{Javanmard2014}.  A comprehensive review is beyond the scope of this manuscript, so we pay particular attention to sample splitting and selective inference as representative of two different frameworks, as described in Section~\ref{Sec:mfdr}.

All of these approaches, however, suffer from two primary drawbacks.  First, they are very restrictive, in the sense that they are often unable to select more than one or two features, even at high false discovery rates.  This is especially true in high dimensions.  Second, they are quite computationally intensive.  For both approaches, it typically takes several orders of magnitude longer to estimate the false discovery rate than to fit the model in first place.

To address both of these drawbacks, \citet{BrehenyMFDR} proposed controlling the marginal false discovery rate (mFDR), and showed that focusing only on this weaker definition of false discovery yields a method that is less restrictive and far less computationally intensive than approaches for controlling conditional FDRs.  \citet{BrehenyMFDR} considered only the case of linear regression models.  Here, we extend that work to the more general class of regression models based on likelihood; in particular, this class includes generalized linear models and Cox proportional hazards models.  We compare these newly developed methods to sample splitting and selective inference using simulated data and apply our approach to two case studies involving high-dimensional data, one with a time-to-event outcome and the other with a binary outcome, to demonstrate the practical utility of our method on real data.

\section{Marginal false discovery rates}
\label{Sec:mfdr}

False discoveries are straightforward to define when conducting single variable hypothesis tests; a false discovery occurs when a feature $X_j$ is declared to be associated with an outcome $Y$ even though the feature is actually independent of the outcome: $X_j \independent Y$. In the regression framework, where many variables are being considered simultaneously, the idea of a false discovery is more complicated. The most common approach, which we refer to as the \textit{fully conditional} perspective, is to consider a feature $X_j$ to be a false discovery if it is independent of the outcome conditional upon all other features: $X_j \independent Y | X_{k \neq j}$. Penalized likelihood methods typically result in only a subset of the available variables being active in the model, thereby motivating the \textit{pathwise conditional} perspective.  This perspective focuses on the model where $X_j$ first becomes active and conditions only on the other variables present in the model (denote this set $M_j$) at that time when assessing whether or not variable $j$ is a false discovery: $X_j \independent Y | X_k \text{ for } k \in M_j$.

\begin{center}
\begin{tikzpicture}[node distance=1cm]

\node(b)[text centered] {$B$};
\node(u)[below of = b, text centered] {$ $};
\node(a)[left of = u,  text centered, xshift = -1.5cm] {$A$};
\node(c)[right of = u, text centered, xshift = 1.5cm] {$C$};
\node(y)[below of = u, text centered] {$Y$};
 
\draw [arrow] (a) -- (b);
\draw [arrow] (a) -- (y);

\end{tikzpicture} \\
\end{center}

In this paper we derive our method under the less restrictive \textit{marginal false discovery} definition \citep{BrehenyMFDR}, which we illustrate using the causal diagram depicted above. In this diagram variable $A$ has a direct causal relationship with the outcome variable $Y$ and should never be considered a false discovery. Variable $C$ is independent of variable $Y$ regardless of any variables we adjust for and should always be considered a false discovery. 

Variable $B$, on the other hand, is a more subtle case: $B$ is is correlated with $Y$, but after adjusting for $A$, $B$ and $Y$ are independent. Depending upon the perspective taken, $B$ might or might not be considered a false discovery. In a fully conditional approach, $B$ is considered a false discovery.  In a pathwise approach, whether $B$ is a false discovery or not depends on whether $A$ is active in the model.

In practice, however, both of these approaches suffer from the difficulty of determining the $A-B-Y$ relationship: is $A$ driving changes in $Y$ and $B$ merely correlated, or vice versa?  Avoiding these complications is one of the primary motivations behind the marginal perspective, which is only concerned with false discoveries arising from variables like $C$.  Depending on the application, selecting variables like $B$ may or may not be problematic, but it is almost always the case that a pure noise variable like $C$ is the worst kind of feature to select.  This point, along with the fact that the marginal false discovery rate is easy to interpret, makes the mFDR broadly useful and informative, although certainly there are scenarios in which the conditional FDRs are useful as well.

\subsection{Penalized likelihood optimization}

Consider data of the usual form $(\y, \X)$, where $\y$ denotes the response for $i = \{1, \ldots, n\}$ independent observations, and $\X$ is a matrix containing the values of $j = \{1, \ldots, p\}$ explanatory variables such that entry $x_{i,j}$ corresponds to the value of the $j^{\textrm{th}}$ variable for the $i^{\textrm{th}}$ observation.  We assume the columns of $\X$ are standardized such that each variable has a mean of $0$ and $\sum_i \x_j^2 = n$.

The explanatory variables in $\X$ are related to $\y$ through a probability model involving coefficients $\bb$.  The fit of the model to the data can be summarized using the log-likelihood, which we denote $\ell(\bb|X,\y)$.  In the classical setting, $\bb$ is estimated by maximizing $l(\bb|X,\y)$.  However, this approach is unstable in high dimensions unless an appropriate penalty, denoted $P_{\lambda}(\bb)$, is imposed on the size of $\bb$.
In this case, $\bbh$ is found by minimizing the objective function
\al{eq:obj}{
  Q(\bb|X,\y) =  -\frac{1}{n} \ell(\bb|X,\y) + P_{\lambda} (\bb).
}

In the classical setting, the maximum likelihood estimate is found by setting the score, $\u(\bb) = \nabla \ell(\bb|X,\y)$, equal to zero.  The penalized maximum likelihood estimate, $\bbh$, is found similarly, although allowances must be made for the fact that the penalty function is typically not differentiable.  These penalized score equations are known as the Karush-Kuhn-Tucker (KKT) conditions in the convex optimization literature, and are both necessary and sufficient for a solution $\bbh$ to minimize $Q(\bb|X,\y)$.

In a likelihood-based regression model, the likelihood depends on $\X$ and $\bb$ through a linear predictor $\be = \X\bb$; in other words, we can equivalently express the likelihood in terms of a loss function $f(\be|\y)=-\ell(\bb|\X,\y)$.
In what follows, we assume that the loss function is strictly convex with respect to the linear predictors $\be$; note that this does not imply strict convexity with respect to $\bb$.
Under these conditions, any solution $\bbh$ that minimizes \eqref{eq:obj} with the lasso penalty $P_{\lambda} (\bb) = \lambda||\bb||_1$ must satisfy \citep{lasso_kkt}:
\begin{equation}
  \label{eq:kkt}
  \begin{alignedat}{2}
  \tfrac{1}{n}u_j(\bbh) &= \lam \textrm{ sign}(\hat{\beta}_j) &\quad \text{if } \hat{\beta}_j &\neq 0 \\
  \tfrac{1}{n}u_j(\bbh) &\in [-\lam,\lam]  &\quad \text{if }  \hat{\beta}_j &= 0
  \end{alignedat}
\end{equation}
for $j \in \{1, \ldots, p\}$.

\subsection{Marginal false discovery rate bounds for penalized likelihood methods}
\label{Sec:main-results}

In this section, we use classical distributional properties of the score function along with the KKT conditions given above to derive an upper bound for the number of marginal false discoveries in the lasso model.
The basic intuition behind the derivation is that, given certain regularity conditions, if feature $j$ is a marginally independent of $\y$, then $Pr(\bh_j \neq 0)$ is approximately equal to $Pr(\tfrac{1}{n}\abs{u_j(\bb)} > \lam)$, where the classical score function $\u$ is evaluated at the true value of $\bb$ provided that the log-likelihood is correctly specified (i.e., that the model assumptions hold).
Given this result, the asymptotic normality of the score allows us to estimate this tail probability, and with it, the expected number of marginal false discoveries at a given value of $\lam$.

Three regularity conditions are required for these theoretical results to hold.
These are given below, where we let $\W = \nabla^2f$ denote the $n \times n$
matrix of second derivatives of the loss function with respect to $\be$, such
that the classical Hessian matrix $\nabla^2 \ell (\bb) = -\X^T\W\X$.  We use
$\W$ to denote this matrix evaluated at the true value of $\bb$ and $\widehat{\W}$
if evaluated at the lasso estimate.  In addition, we let $v_j=\x_j^T\W\x_j$,
with $\hat{v}_j$ defined similarly.

\begin{description}[labelindent=0.5cm, leftmargin=*]
\item [(A1)] Asymptotic normality of the score function: $(\X^T\W\X)^{-1/2}\u(\bb) \inD N(\zero,  \I)$, where $\I$ denotes the $p \times p$ identity matrix.
\item [(A2)] Vanishing correlation: $\frac{1}{n}\x_j^T\W\X_{-j} \inP \zero$.
\item [(A3)] Estimation consistency: $\sqrt{n}(\bbh-  \bb)$ is bounded in probability.
\end{description}

(A1) is a standard result of classical likelihood theory and can be shown for many types of models.  (A3) is not a trivial condition, but has been studied and shown to hold for various models and various types of penalties under certain conditions.  (A2), on the other hand, is unlikely to be truly satisfied by most features in practice.
Our theoretical results illustrate what is required for the proposal to work perfectly in the sense of providing a consistent estimate for the mFDR.
In practice, (A2) serves as a worst-case scenario for the correlation structure in the sense that other correlation structures will, on average, lead to fewer false discoveries.
Viewed as an estimator, this means that the mFDR equation we will derive \eqref{eq:mfdr} is inherently conservative, in the sense that it will overestimate the true mFDR.
Viewed instead as a control procedure, this means that \eqref{eq:mfdr} provides a probabilistic upper bound on the false discovery rate, and that if we use this equation to limit the mFDR to, say, 10\%, we can be confident that the true mFDR is even smaller.

The fundamental reason for this is that if a pool of noise features are uncorrelated, a variable selection method may select several of them, whereas if they are correlated, the method will tend to select just one.  More explicit results are shown in Section~\ref{Sec:sim}, which shows that when (A2) is violated, the mFDR bound is less tight, as one would expect.
Assumptions (A1)-(A3) are further discussed in the specific cases of logistic and Cox regression later in this section.

We now formally state our main theoretical result.  In interpreting this result, it is important to keep in mind that only features that are marginally independent of the outcome will satisfy both assumption (A2) and $\beta_j=0$.
In other words, the theorem applies to variables like $C$ in the causal diagram of Section~\ref{Sec:mfdr}, but not variable $B$: although the regression coefficient for $B$ is zero, its correlation with $A$ violates (A2).

\begin{theorem}
  \label{Thm:main}
  For any solution $\bbh$ of the lasso-penalized objective \eqref{eq:obj}, we have $\bh_j \neq 0$ if and only if
  \al{eq:selection}{\tfrac{1}{n}\abs{u_j(\bbh) + v_j\bh_j} > \lam.}
  Furthermore, provided that that feature j satisfies (A1)-(A3) and $\beta_j=0$,
  \as{\frac{u_j(\bbh) + v_j \bh_j}{\sqrt{v_j}} \inD N(0, 1).}
\end{theorem}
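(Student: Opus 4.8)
The plan is to handle the two assertions separately, since the equivalence in \eqref{eq:selection} is an exact algebraic consequence of the optimality conditions whereas the limiting normality is an asymptotic statement requiring (A1)--(A3). For the ``if and only if'' claim I would start from the KKT conditions \eqref{eq:kkt} and set $z_j = \tfrac1n(u_j(\bbh) + v_j\bh_j)$. Strict convexity of the loss in $\be$ makes $\W$ positive definite, so $v_j = \x_j^T\W\x_j > 0$. When $\bh_j \neq 0$, the active-set condition gives $\tfrac1n u_j(\bbh) = \lam\,\textrm{sign}(\bh_j)$, hence $z_j = \textrm{sign}(\bh_j)\bigl(\lam + \tfrac1n v_j\abs{\bh_j}\bigr)$ and $\abs{z_j} = \lam + \tfrac1n v_j\abs{\bh_j} > \lam$. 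When $\bh_j = 0$, the inactive condition gives $z_j = \tfrac1n u_j(\bbh) \in [-\lam,\lam]$, so $\abs{z_j} \le \lam$. Combining the first case with the contrapositive of the second yields the stated equivalence, with positivity of $v_j$ the only structural fact required.

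For the normality the key idea is that the term $v_j\bh_j$ is exactly the correction that undoes the first-order bias variable selection introduces into the score. First I would Taylor-expand $u_j$ about the true value $\bb$. Since $\u = \nabla\ell$ and $\nabla^2\ell = -\X^T\W\X$, the gradient of $u_j$ is the $j$th row $-\x_j^T\W\X$, giving
\[
u_j(\bbh) = u_j(\bb) - \x_j^T\W\X(\bbh-\bb) + r_j,
\]
where $r_j$ is the second-order remainder. Splitting the Hessian product into diagonal and off-diagonal parts and using $\beta_j=0$ gives $\x_j^T\W\X(\bbh-\bb) = v_j\bh_j + \x_j^T\W\X_{-j}(\bbh_{-j}-\bb_{-j})$, so after adding $v_j\bh_j$ the diagonal term cancels exactly and
\[
\frac{u_j(\bbh)+v_j\bh_j}{\sqrt{v_j}} = \frac{u_j(\bb)}{\sqrt{v_j}} - \frac{\x_j^T\W\X_{-j}(\bbh_{-j}-\bb_{-j})}{\sqrt{v_j}} + \frac{r_j}{\sqrt{v_j}}.
\]

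I would then dispatch the three pieces. The leading term converges to $N(0,1)$: (A1) makes $\u(\bb)$ asymptotically $N(\zero,\X^T\W\X)$, whose $j$th marginal is $N(0,v_j)$, so $u_j(\bb)/\sqrt{v_j}\inD N(0,1)$. For the cross term I would factor it as $\tfrac{\sqrt n}{\sqrt{v_j}}\,\bigl(\tfrac1n\x_j^T\W\X_{-j}\bigr)\,\sqrt n(\bbh_{-j}-\bb_{-j})$; under the standardization $\sum_i x_{ij}^2=n$ and bounded curvature the prefactor is $O(1)$, (A2) makes the middle factor $o_p(1)$, and (A3) makes the last $O_p(1)$, so the product is $o_p(1)$ and a Slutsky argument finishes the claim. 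The main obstacle is that the exact Taylor expansion evaluates the Hessian at an intermediate point $\tilde\bb$ (equivalently, a practical implementation replaces $\W$ by $\widehat\W$), so the diagonal cancellation is only approximate: controlling the discrepancy $(v_j-\tilde v_j)\bh_j$ and the remainder $r_j$ requires consistency of $\bbh$ together with continuity of $\W(\cdot)$ in the linear predictor, giving $\W(\tilde\bb)\inP\W(\bb)$ so that the mismatch, again scaled by $\sqrt n\,\bh_j=O_p(1)$, is negligible after dividing by $\sqrt{v_j}$. Making the cross-term product genuinely $o_p(1)$ while respecting the implicit dimension $p$ in the sum over $k\neq j$ is the other point that must be made precise.
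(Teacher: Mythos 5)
Your proposal is correct and follows essentially the same route as the paper: the equivalence is read off from the KKT conditions together with $v_j>0$, and the normality comes from the identical Taylor expansion of $u_j$ about $\bb$, with the diagonal term cancelling $v_j\bh_j$, the leading term handled by (A1), the cross term by (A2) and (A3), and the remainder by (A3). Your added care about the KKT case analysis and about the Hessian being evaluated at an intermediate point (the paper simply absorbs this into an $o_p(n)(\bbh-\bb)$ remainder justified by (A1)) is a welcome refinement but not a different argument.
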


\begin{proof}
  The first remark follows directly from the KKT conditions \eqref{eq:kkt} and the fact that, if the loss function $f(\be|\y)$ is strictly convex, then $\W$ is positive definite and $v_j$ is positive.  The asymptotic normality of the score function (A1) implies that the following Taylor series expansion holds:
\as{\u(\bbh) = \u(\bb) - \X^T\W\X(\bbh - \bb) + o_p(n)(\bbh-\bb).}
Since $\beta_j=0$, we then have
\as{u_j(\bbh) &= u_j(\bb) - \x_j^T\W\X_{-j}(\bbh_{-j} - \bb_{-j}) - \x_j^T\W\x_j\bh_j + o_p(n)\bh_j,\\
\intertext{or}
\tfrac{u_j(\bbh) + v_j \bh_j}{\sqrt{v_j}}  &= \tfrac{u_j(\bb)}{\sqrt{v_j}} - \sqrt{\tfrac{n}{v_j}}[\tfrac{1}{n}\x_j^T\W\X_{-j}][\sqrt{n}(\bbh_{-j} - \bb_{-j})] + o_p(1)\sqrt{\tfrac{n}{v_j}}\sqrt{n}\bh_j.}
Noting that $v_j/n=O(1)$, the first term on the right side of the equation converges to $N(0,1)$ by (A1); the second term goes to zero by conditions (A2) and (A3); and the third term also goes to zero by (A3).
\end{proof}

Theorem~\ref{Thm:main} therefore implies that the probability that feature $j$ is selected, given that it is marginally independent of the outcome, is approximately the probability that a random variable following a $N(0, v_j/n^2)$ distribution exceeds $\lam$ in absolute value.
In principle, the expected number of marginal false selections could be obtained by summing this probability over the set of marginally independent noise variables; in practice, since the identity of this set is unknown, a conservative alternative is to sum over all $p$ variables.
This leads to the following upper bounds for the number and rate of marginal false discoveries:
\als{eq:mfdr}{
  \widehat{\FD} &= 2 \sum_{j=1}^{p} \Phi \left(\frac{-n\lam}{\sqrt{\hat{v}_j}}\right)\\
  \widehat{\mFDR} &= \frac{\widehat{\FD}}{|S|},
}
where $S$ is the set of selected variables and $|S|$ its size.  Note that because $p$ is used as an upper bound for the total number of noise features, $\widehat{\FD}$ and $\widehat{\mFDR}$ will be somewhat conservative even when (A1)-(A3) are satisfied.  However, in the scenario where most features are null, as is often presumed in high dimensional applications, this effect is relatively minor.  For a given value of $\lambda$ the process of calculating $\widehat{\mFDR}$ is encapsulated by Algorithm~\ref{Alg:mfdr}.

\begin{algorithm*}
\caption{Calculating the mFDR upper bound}\label{Alg:mfdr}
\begin{algorithmic}[10]
\Procedure{}{}
	\State Estimate $\widehat{W} \gets \nabla^2f(\beh)$
	\For {\texttt{$j \in \{1, \ldots, p\}$}}
		\State $\hat{v}_j \gets \x_j^T\widehat{W}\x_j$
		\State $\widehat{\FD}_{j, \lam} = 2 \Phi \left(\frac{-n\lam}{\sqrt{\hat{v}_j}}\right)$  by the distributional result of Theorem 1
	\EndFor
	\State $\widehat{\FD}_\lam =  \sum_{j=1}^{p} \widehat{\FD}_{j, \lam}$
	\State $\widehat{\mFDR}_\lam = \text{min}\left(\frac{\widehat{\FD}_\lam}{|S_\lam|}, 1 \right)$
\EndProcedure
\State \textbf{return} $\widehat{\mFDR}_\lam $
\end{algorithmic}
\end{algorithm*}

It is worth noting that the version of Theorem~\ref{Thm:main} presented in this paper takes $p$ to be fixed.  It is of interest to extend Theorem~1 to allow $p \to \infty$ as $n \to \infty$ and determine how this affects convergence; we leave this as a topic to be pursued in a future manuscript.

\subsection{Penalty functions}

The form of the mFDR calculation \eqref{eq:mfdr} is determined by the KKT conditions \eqref{eq:kkt}.  Although the theorem in previous Section is specific to the lasso, many other penalties proposed in the literature have very similar KKT conditions. This in turn allows our results to be easily extended to other penalized methods.

For example, consider the elastic net \citep{Elastic_Net}, which utilizes two penalty parameters: $\lambda_1$ and $\lambda_2$, with $\lam_1, \lam_2 >0$. The elastic net solution is found by minimizing $-\frac{1}{n} \ell(\bb|X,\y) + \lambda_1 ||\bb||_1 + \frac{\lambda_2}{2}||\bb||_{2}^2 $. The resulting KKT conditions dictate that $\hat{\beta_j} \neq 0$ if and only if
\as{\tfrac{1}{n}\abs{u_j(\bbh) + v_j\bh_j} > \lam_1.}
Compared with the corresponding equation for the lasso \eqref{eq:selection}, the only change is that the right hand side of the selection condition has changed from $\lam$ to $\lam_1$.  Thus, equation \eqref{eq:mfdr} applies to the elastic net as well, with a similarly trivial change (replacing $\lam$ with $\lam_1$).  Note that the actual estimates $\bbh$ (and as a result $\widehat{\W}$, $\hat{v}$, and $\abs{S}$) may certainly change a great deal, and thus the resulting inferences may be very different, but the form of the mFDR upper bound is essentially identical.

Furthermore, in some cases, such as for MCP and SCAD, the form of the upper bound is exactly the same.  In other words, equation~\eqref{eq:selection} holds for these penalties as well as the lasso, and therefore the mFDR upper bound is unchanged from \eqref{eq:mfdr} -- although again, the actual estimates $\bbh$ and any quantities based on them will be very different.

It is worth noting here that MCP and SCAD differ from the the lasso by relaxing the degree of penalization on variables with large effects.  This leads to solutions with greater sparsity and reduced bias.  In particular, both theoretical analysis and simulation studies have demonstrated that convergence is typically faster for MCP and SCAD than for the lasso \citep{MCP,SCAD,Breheny2011}.  Thus, as we will see later in the paper, the accuracy of regularity condition (A3) is typically better at finite sample sizes for these estimators than for the lasso and as a consequence, the resulting mFDR bound is tighter (i.e., less conservative).

\subsection{Likelihood models}

The results derived in this manuscript are valid for any likelihood-based model for which classical asymptotic results hold, and are thus applicable to a wide variety of models.  In this section, we provide additional detail for two such models, logistic regression and Cox regression, that are of particular interest due to their widespread use.

\subsubsection{Logistic regression}

Suppose $y_i$ follows a Bernoulli distribution with success probability $\pi_i$. In logistic regression the logit of $\pi_i$ is modeled as a function of $\be = X\bb$.  This results in a likelihood consisting of the product of $n$ independent Bernoulli distributions with success probabilities $\pi_i = \exp(\x_i^T\bb)/\{1 + \exp(\x_i^T\bb)\}$ and $\W$ a diagonal matrix whose entries are given by $\pi_i(1-\pi_i)$.
The asymptotic normality of the score function for logistic regression (A1) can be found in, e.g., \citet{McCullagh1989}, while necessary conditions for $\sqrt{n}$-consistency of the lasso estimates for logistic regression (A3) are established in \citet{SCAD}.

\subsubsection{Cox regression}
\label{Sec:cox}

The results of Section~\ref{Sec:main-results} also apply to penalized Cox proportional hazards regression. Here the outcome of interest contains two components: a time, $y_i$, along with an accompanying indicator variable, $d_i$, where $d_i = 1$ indicates $y_i$ is an observed event time and $d_i = 0$ indicates $y_i$ is a right censoring time.

Let $t_1 < t_2 < \ldots < t_m$ be an increasing list of unique failure times indexed by $j$. The Cox model assumes a semi-parametric form of the hazard such that $h_i(t) = h_0(t)e^{\x_i^T \bb}$, where $h_i(t)$ is the hazard for observation $i$ at time $t$ and $h_0(t)$ is a common baseline hazard. Cox regression is based upon the partial likelihood \citep{Cox1972}
\begin{equation*}
L(\be)  = \prod_{j=1}^{m} \frac{\exp(\eta_j)}{\sum_{i \in R_j} \exp(\eta_i),} 
\end{equation*}
where $R_j$ denotes the set of observations still at risk at time $t_j$, known as the risk set.

Letting $\pi_{ij} = \exp(\eta_i)/\sum_{k \in R_j}\exp(\eta_k)$, the $i$th diagonal element of $\W$ is $\sum_j d_j\pi_{ij}(1-\pi_{ij})$, while its $i,k$th off-diagonal element is given by $-\sum_j d_j\pi_{ij}(1-\pi_{kj})$.  As discussed in \citet{Simon2011}, however, the off-diagonal elements of $\W$ are typically negligible except for very small sample sizes.  Thus, for the simulations we present in Section~\ref{Sec:sim}, we took the off-diagonal elements of $\W$ to be zero in order to speed up the calculations.  We found no meaningful difference in the calculation of $\widehat{\mFDR}$ when using the diagonal approximation of $\W$ in place of the full matrix.

The asymptotic normality of the score function for Cox regression (A1) is established in \citet{Andersen1982}, while necessary conditions for $\sqrt{n}$-consistency of the lasso estimates for Cox regression (A3) can be found in \citet{Fan_scad}.  There are, however, some additional considerations considering censoring with respect to assumption (A2) in Theorem~\ref{Thm:main}.  As discussed earlier, (A2) holds for features that are marginally independent of all other features as well as the outcome.  For (A2) to hold for Cox regression, however, the feature must also be independent of the censoring mechanism.  This additional requirement is needed because when a variable is related to the censoring mechanism, its distribution will drift over time as certain values are disproportionately removed from the risk set, which can induce correlations between variables that would otherwise be uncorrelated. The impact of this additional concern on $\widehat{\mFDR}$ is further assessed via simulation in Section 3.3. 

\section{Simulation studies}
\label{Sec:sim}

In this section we study the behavior our proposed method for controlling the mFDR via several simulation studies. In each study we generate $j \in \{1, \ldots, p\}$ features from standard normal distributions for $i = 1, \ldots, n$ subjects. We focus on penalized logistic and Cox regression models, although we also carry out some simulations involving normally distributed outcomes in order to compare our results with those in \citet{BrehenyMFDR}.

For logistic regression, binary outcomes are generated from independent Bernoulli distributions with parameter $\pi_i = \exp(\x_i^T \bb)/\{1 + \exp(\x_i^T \bb)\}$. For Cox regression scenarios, survival outcomes are generated from independent exponential distributions with rate parameter $\theta_i = \exp(\x_i^T \bb)$. And for linear regression scenarios outcomes are generated from the model $y_i = \x_i^T\bb + N(0,\sigma^2)$. Factors such as $p$, $n$, $\bb$, the correlation between features, and censoring are varied throughout these simulations.

In this section and the next, we compare our proposed mFDR control method with sample splitting and covariance testing.  The covariance testing approach was implemented using the {\tt covTest} package \citep{CovTest} and the \text{forwardStop} function of the \text{SelectiveInference} package \citep{Selective_Inference}; however, the current versions of these packages caution that this approach is considered ``developmental'' for logistic regression and is not currently implemented for Cox regression.  Although software does exist for sample splitting for linear regression via the {\tt hdi} package \cite{Dezeure2015}, the current version of the package does not offer methods for logistic or Cox regression models, so we manually implemented our own sample splitting approach.

\subsection{Accuracy, sample size, and correlation}
\label{Sec:accuracy}

In Section~\ref{Sec:main-results}, the mFDR bound \eqref{eq:mfdr} was shown to be tight in an asymptotic sense provided that condition (A2) was met.  To see how useful this result is in practice, we wish to observe how tight this bound is at finite sample sizes as well as when condition (A2) is violated -- i.e., when noise features are correlated.
We evaluate the accuracy of the mFDR bound by comparing $\widehat{\mFDR}$ with the true mFDR -- i.e., the empirical proportion of noise features selected -- at each value along a fixed $\lambda$ sequence, averaged across 1,000 simulated data sets.
In this simulation, time-to-event outcomes were uncensored; the effect of censoring on $\widehat{\mFDR}$ is considered in the next section.

We generate our data using $p = 100$ with $\beta_{1:4} = 10/\sqrt{n}$ and $\beta_{5:40} = 0$ while varying $n$ from 200 to 1,600 in increments of 200. We take $\bb$ to be function of $n$ in order to maintain the difficulty of the variable selection problem as $n$ increases.
Without this provision, the feature selection problem becomes too easy at large sample sizes and the ``interesting'' region where the mFDR is not close to 0 or to 1 is very small.

In total we assess twelve different scenarios consisting of each possible combination of:
\begin{itemize}[leftmargin=*, labelindent=0.5cm]
\item Two different correlation structures: independent noise variables and correlated noise variables
\item Three different regression methods: linear regression, logistic regression, and Cox regression
\item Two different penalties: lasso and MCP. 
\end{itemize}
For our correlated setting noise variables are given an autoregressive correlation structure based upon their index such that $\cor(\x_j,\x_k) = 0.8^{|j - k|}$. We display the mean value of $\widehat{\mFDR}$ at the $\lambda$ value where the average observed proportion of noise variables is 20\%.

\begin{figure} [!htb]
 \centering
  \includegraphics[width=.9\textwidth]{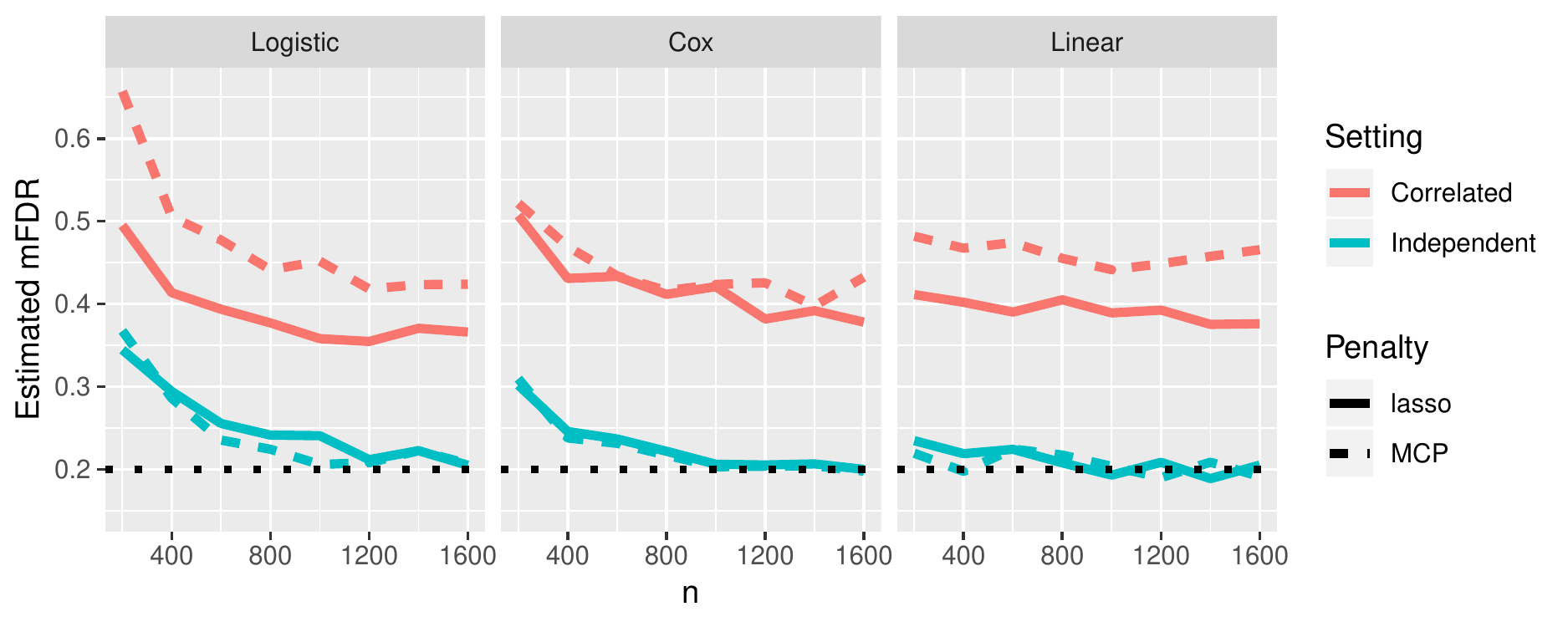}
  \caption{\label{Fig:converge}Finite-sample accuracy of the mFDR bound \eqref{eq:mfdr}.}
\end{figure}

Focusing first on the case of independent noise features, we see that unlike in the linear regression case, where $\widehat{\mFDR}$ is essentially perfect at all sample sizes considered, for logistic and Cox regression $\widehat{\mFDR}$ tends to provide a conservative upper bound at small sample sizes, although this effect does diminish and the bound becomes tighter as $n$ increases.
Furthermore, the MCP penalty leads to slightly faster convergence and tighter bounds than the lasso penalty; this is particularly noticeable in the logistic and Cox regression settings.

When noise features are correlated, the $\widehat{\mFDR}$ bound is always conservative, at all sample sizes.  In terms of Theorem~\ref{Thm:main}, when noise features are correlated, (A2) is violated because the term $\frac{1}{n}\x_j^T\W\X_{-j}$ does not converge to zero, leaving behind a remainder term that is unaccounted for in \eqref{eq:mfdr}.
Nevertheless, while this does mean that the bound is less tight, it does not prevent the use of the method for the purposes of mFDR control.
Furthermore, in this case, the effect is relatively slight: for the most part, $\widehat{\mFDR}$ is between 35\% and 40\% when the true mFDR is 20\%.

\subsection{Features associated with censoring}

As discussed in Section~\ref{Sec:cox}, the presence of censoring in Cox regression means that condition (A2) will also be violated if noise features are associated with the censoring mechanism.  Here, we assess the impact such an association has on the $\widehat{mFDR}$.

Consider two scenarios, A and B. In each scenario censoring times are generated from exponential distributions with $\theta_{i} = \exp(\x_i^T \bg)$, with
\as{\bg_A &= (0, 0, 0, 0, .45, -.45, .45, -.45, 0, \ldots, 0)}
for scenario A, while in scenario B all variables are independent of censoring (i.e., $\bg_B = \zero$).  By design, each of these scenarios results, on average, in a censoring rate of 50\%.  The same set of $n = 500$ true failure times is used for both scenarios and are generated from independent exponential distributions with rate parameter $\theta_i = \exp(\x_i^T \bb)$, where $\beta_{1:4} = .45$ and $\beta_{5:40} = 0$.  The value 0.45, used in $\bg$ and $\bb$, is approximately $10/\sqrt{n}$ for this scenario, so that the results of this simulation are comparable with those displayed in Figure~\ref{Fig:converge}.



In scenario A, several noise features are associated with the censoring mechanism.  Here, when $\widehat{\mFDR}$ is 10\%, the observed false discovery rate averaged across 1,000 simulations is 4.03\% using the lasso penalty and 6.31\% using MCP.  In scenario B, where all censoring assumptions are met, the observed false discovery rates are 5.01\% and 6.85\% for the lasso and MCP, respectively.  These results demonstrate that noise variable associated with the censoring mechanism will lead to a more conservative upper bound; however, the effect is negligible compared to the considerations described in Section~\ref{Sec:accuracy}.

\subsection{Comparison with other methods}
\label{Sec:sim-comp}

In this section, we compare our proposed mFDR approach with other methods of variable selection used in the high-dimensional setting. We generate our data based upon the structure depicted in the causal diagram of Section~\ref{Sec:mfdr}, using $n = 400$ and $p = 1000$ with ten variables causally related to the outcome (like ``A''), such that $\beta_{1:10} = b$ and vary $b$ throughout the study. Corresponding to each of these variables, we generate nine correlated ($\rho = 0.5$) variables (akin to ``B''). The remaining 900 variables are noise (akin to ``C''), however they are correlated with each other such that $\cor(\x_j,\x_k) = 0.8^{|j - k|}$, thereby creating a situation where the mFDR upper bound will be conservative.

\begin{figure} [htb!]
 \centering
  \includegraphics[width=0.75\textwidth]{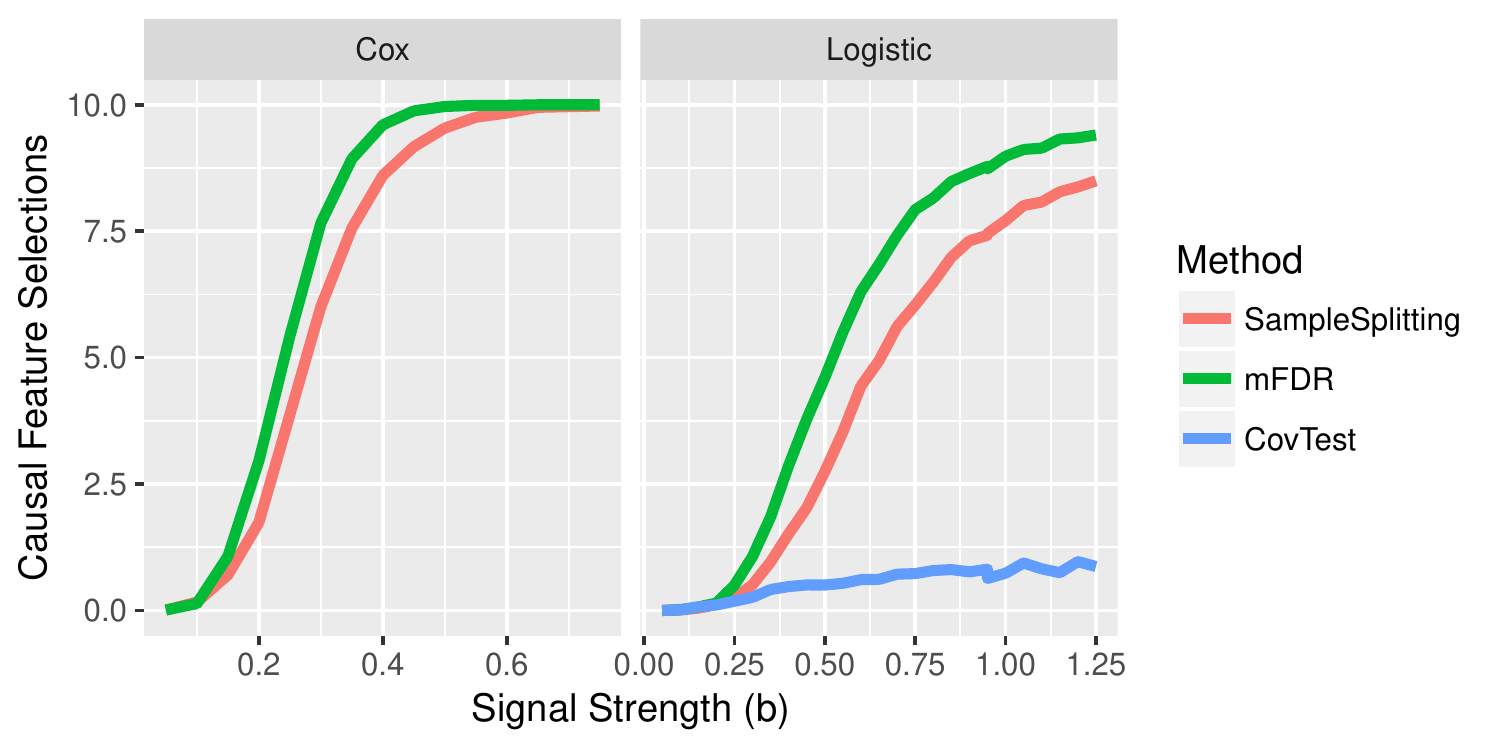}
  \caption{\label{Fig:lassopower} The number of true, ``A", variables selected by each lasso based method of false discovery rate control, averaged across 1,000 simulation iterations plotted as a function of $\beta$.}
\end{figure}


For each of the following methods, we assess the average number of selections for each variable type:
\begin{itemize}[labelindent=0.5cm, leftmargin=*]
\item Our mFDR approach, where variables are selected by the lasso using the smallest $\lambda$ with $\widehat{\mFDR} \leq .1$.
\item Sample splitting \citep{Sample_Splitting}, where we first fit a lasso regression model on half of the data to select the top 20 variables. We then use the remaining data to fit an unpenalized regression model on the variables selected in the first stage. With the unpenalized model we conduct traditional Wald hypothesis tests on the regression coefficients and apply the Benjamini-Hochberg procedure \citep{BH_1995} to control the false discovery rate at $10\%$. Here, we limit the first-stage selections to 20 variables so that in the second stage, the model contains 10 events per variable \citep{peduzzi_epv}.
\item Covariance testing \citep{CovTest}, which we use in conjunction with the forward stopping rules proposed by \citet{GSell2016} to control the pathwise false discovery rate at $10\%$. 
\item Univariate testing, where we fit separate, unpenalized regression models to each variable individually, and then adjust the resulting p-values using the Benjamini-Hochberg procedure to control the false discovery rate at $10\%$.
\item Cross-validation (CV), where 10-fold CV is used to choose $\lam$ for the lasso model; note that this approach makes no attempt to control the false discovery rate.
\end{itemize}

The comparison of FDR control procedures for the lasso (mFDR, sample splitting, and covariance testing) is shown in Figure~\ref{Fig:lassopower}.  Compared to other false discovery rate control methods for penalized regression, the mFDR approach is more powerful -- at the same nominal FDR, the marginal approach selects more causally important variables at each signal strength.  This is not a surprising result: a primary motivation of the marginal approach is that, by adopting a less restrictive definition of false discovery, we could improve power.  The covariance testing approach in particular is particularly conservative in the logistic regression setting, selecting far fewer causal features than either mFDR or sample splitting.  As of the time of this writing, the covariance test has not been extended to Cox regression and is therefore not shown in the left panel of Figure~\ref{Fig:lassopower}.

\begin{figure} [!ht]
 \centering
  \includegraphics[width=0.75\textwidth]{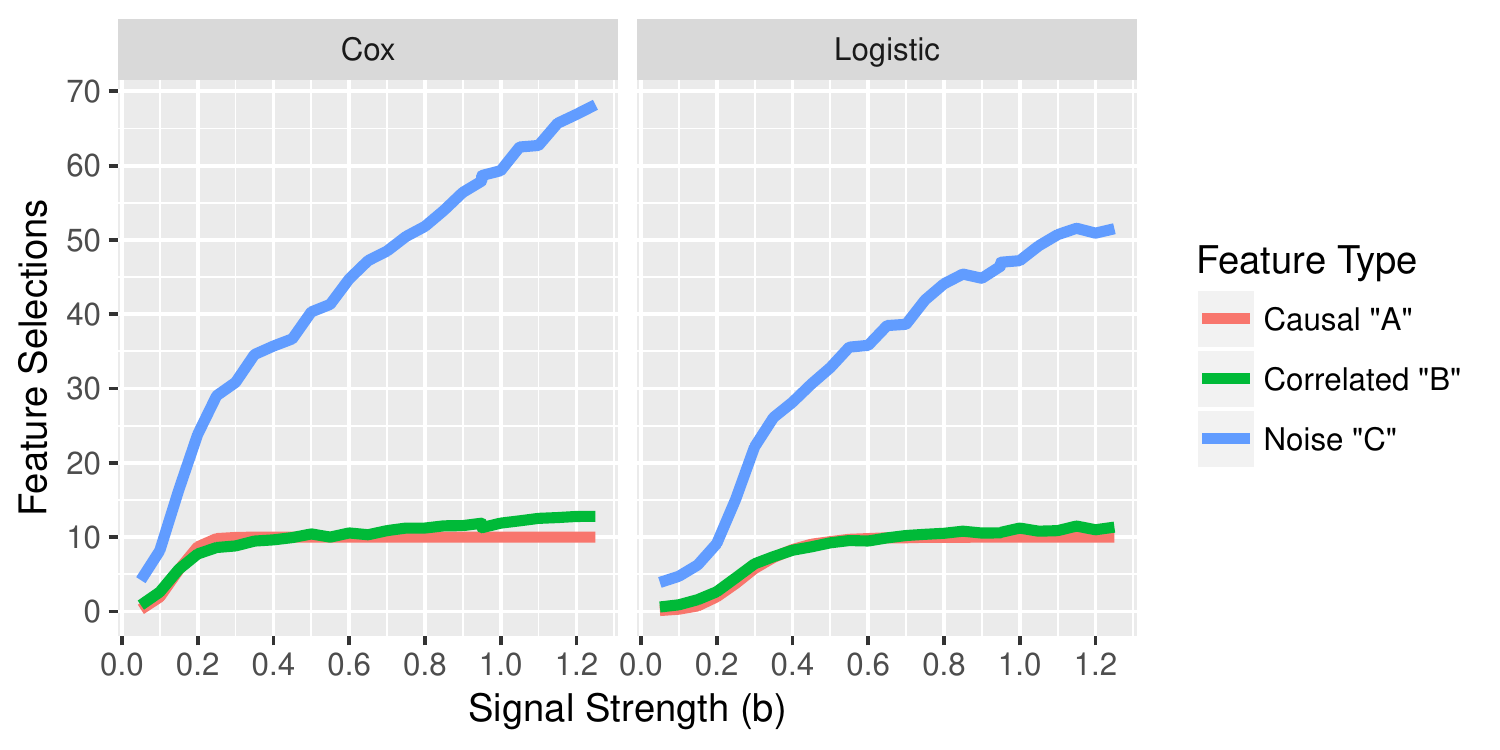}
  \caption{\label{Fig:cv} The average number of selections for each type of variable depicted in the causal diagram of Section~\ref{Sec:mfdr}, when cross-validation is used to select $\lam$.}
\end{figure}

Simulation results for cross validation are shown in Figure~\ref{Fig:cv}.  As the figure shows, while cross-validation has excellent sensitivity, selecting nearly all the causal (``A'') variables even at low signal strength, it does not control the number of false selections.  In fact, at all levels of $b$, the majority of the features in the CV-selected model are noise features.  This problem becomes increasingly worse as the signal strength increases.  For example, at $b=0.8$, over 70\% of the features in the CV model are noise features.  In contrast, as may be seen in Figure~\ref{Fig:univariate}, the vast majority of the features selected by the mFDR control approach are causally related to the outcome.

Figure~\ref{Fig:univariate} compares the mFDR and univariate approaches, both of which are designed to control the proportion of marginal false discoveries without making any claims regarding features indirectly associated with the outcome. The top panels show that the two methods are comparable in terms of their ability to select the true causal variables.  The advantage of using a regression-based approach over univariate testing, however, is clearly seen when looking at the number of other selections.  Lasso with mFDR control greatly reduces the number of correlated (``B'') selections compared to univariate testing: whereas univariate testing tends to identify dozens of indirectly associated features as discoveries, lasso with mFDR control tends to select at most two.

The mFDR control approach also selects far fewer noise (``C'') variables than univariate testing.  The reason for this is that the large number of ``B'' features selected by univariate testing allow it to select a much larger number of features while maintaining the overall FDR at 10\%.

\begin{figure} [!htb]
 \centering
  \includegraphics[width=0.8\textwidth]{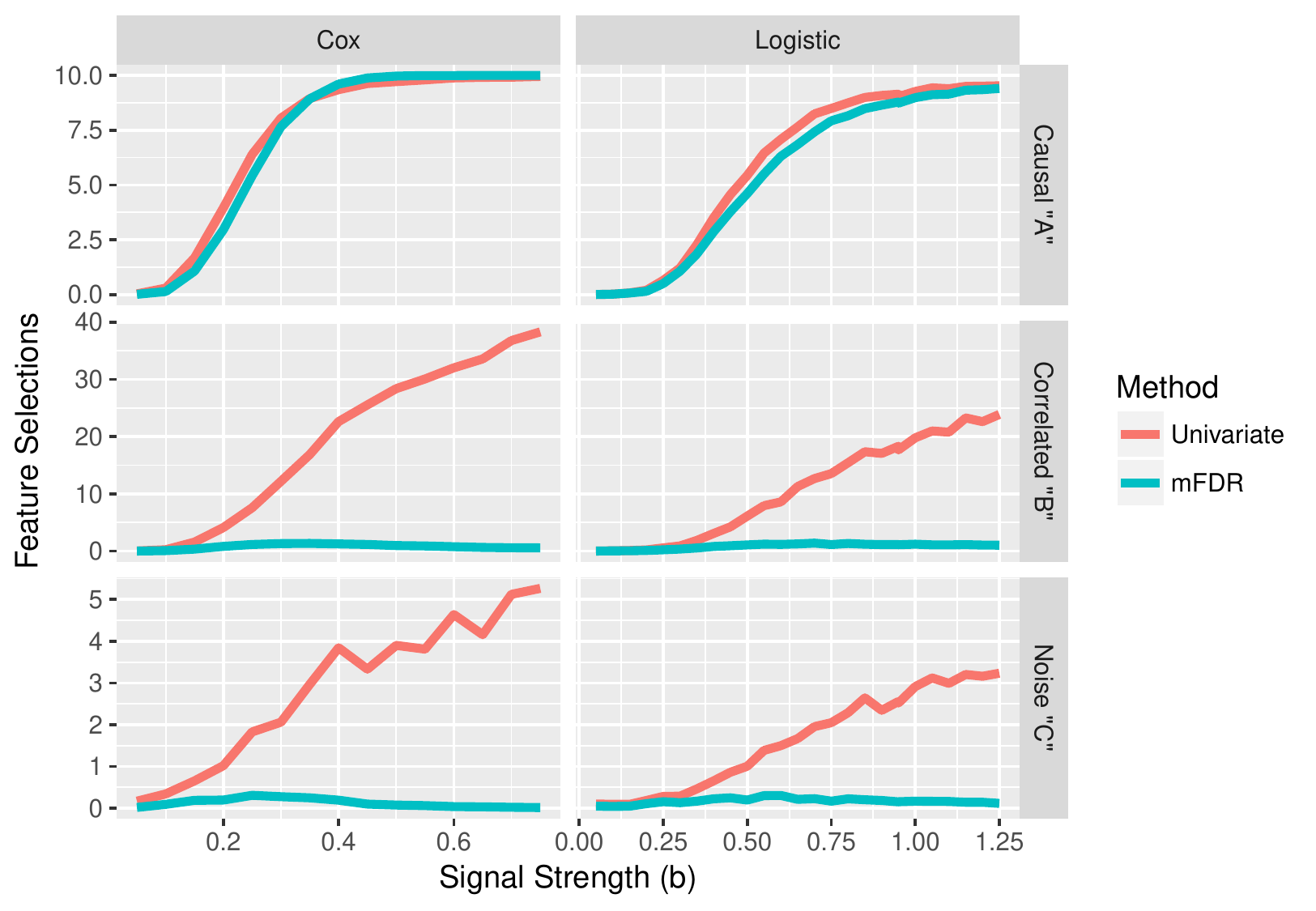}
  \caption{\label{Fig:univariate} The average number of selections, for each type of variable depicted in the causal diagram of Section~\ref{Sec:mfdr}, is plotted as function of $\beta$ for each method which controls the false discovery rate.}
\end{figure}

For this reason, Figure~\ref{Fig:univariate} is potentially somewhat misleading in terms of comparing mFDR and univariate testing in terms of their power to identify type ``A'' features.
Table~\ref{Tab:univariate} presents an alternative metric for comparing the univariate and mFDR approaches.  Given the large difference between the two approaches with respect to number of indirectly associated (``B'') features selected, the table presents the ratio of the number of causally associated features to the number of noise features -- i.e., the A:C ratio -- for various values of $\beta$.  Whereas Figure~\ref{Fig:univariate} suggests that the two approaches were approximately equally effective at identifying the truly important variables, Table~\ref{Tab:univariate} shows that the lasso-mFDR approach is far more powerful in terms of the number of true discoveries per noise feature selected.

\begin{table}[ht!]
\centering
\caption{\label{Tab:univariate} The ratio of Causal:Noise (A:C) feature selections for various values of the signal strength $b$.}
\begin{tabular}{c | r r r r}
  \hline
  & \multicolumn{2}{c}{Cox} & \multicolumn{2}{c}{Logistic}\\
 $b$ & Univariate & Lasso-mFDR & Univariate & Lasso-mFDR \\ 
  \hline
  0.25 & 3.5 & 17.7 & 2.3 & 3.3 \\ 
  0.35 & 3.0 & 36.2 & 4.9 & 11.1 \\ 
  0.45 & 2.9 & 102.2 & 5.3 & 15.3 \\ 
  0.65 & 2.4 & 333.3 & 4.6 & 32.6 \\ 
  0.75 & 1.9 & 750.0 & 4.1 & 47.6 \\ 
   \hline
\end{tabular}
\end{table}

Indeed, for the univariate approach, the rate of true variable selections per noise variable selection is not only low, but in fact decreases as the signal strength increases. In contrast, with the mFDR approach, the causal:noise selection ratio increases with signal strength, as one would hope.

\section{Case studies}

\subsection{Lung cancer survival and gene expression}

\citet{Shedden2008} studied the survival of 442 early-stage lung cancer subjects. Researchers collected high-dimensional gene expression data on 22,283 genes and additional clinical covariates of age, race, gender, smoking history, cancer grade, and whether or not the subject received adjuvant chemotherapy.  The retrospective study used time to death as its outcome, which was observed in 236 (53.3\%) of the subjects, the remainder being right censored.

\begin{figure} [!htb]
 \centering
  \includegraphics[width=0.65\textwidth]{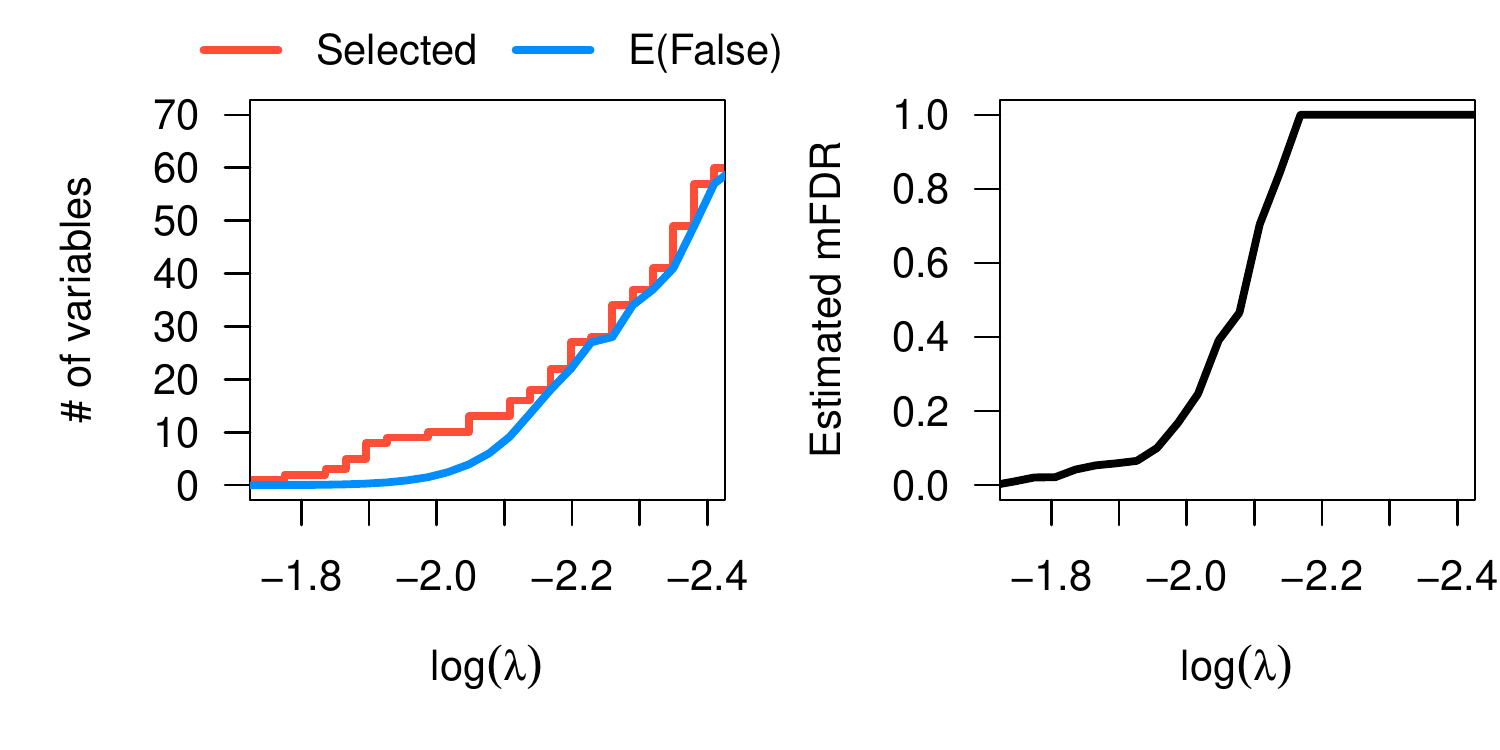}
  \caption{\label{Fig:Shedden} Marginal false discovery estimates for the Shedden data.  (left) Total number of genes selected and expected number of marginal false discoveries; (right) Estimated mFDR.}
\end{figure}

In our analysis we aim to select important genes while controlling the marginal false discovery rate.  We use a semi-penalized Cox proportional hazards regression model that allows all of the clinical covariates to enter the model unpenalized while using the sparsity introduced by penalization to screen for additional genes related to survival.  We aim to limit the mFDR to 10\% and compare our results to those of other methods.  Note that in this example, the number of features $p$ in equation~\eqref{eq:mfdr} is the number of penalized features (i.e., the number of genes), since the clinical covariates will always be in the model.

Figure~\ref{Fig:Shedden} illustrates the mFDR estimates for this study.  On the left, we see a gap between the number of selected genes and the expected number of false selections at $\log(\lam) \approx -2$.  This indicates that for these values of $\lam$, many of the genes selected by the lasso are likely to be truly related to survival, as it would be unlikely for so many noise features to be selected merely by random chance.  On the right, these expected number of false discoveries is plotted as a fraction of the selected features (i.e., the mFDR).

\begin{table}[htb!]
\centering
\caption{\label{Tab:case-studies} Results for different choices of $\lambda$ in applying the lasso and MCP penalties to the Shedden and Spira data.  mFDR = Marginal false discovery rate (\%). CVE = Cross-validation error.  MCE = Misclassification error (\%).}
\vspace{.2cm}
\begin{tabular}{l l r r r r r @{\hskip 0.5in} r r r r r r}
  \hline
  & & \multicolumn{5}{c}{Shedden case study} & \multicolumn{6}{c}{Spira case study}\\
  & & $\lambda$ & EF & S & mFDR & CVE & $\lambda$ & EF & S & mFDR & CVE & MCE\\
  \hline
  lasso & CV & 0.118 & 13.53 & 16 & 84.6 & 11.05 & 0.029 & 71 & 71 & 100 & 1.00 & 24.5  \\
  & CV(1se) & 0.180  &  0 &  0 &  -  & 11.17 & 0.063 & 32 & 32 & 100 & 1.07 & 25.0  \\
  & mFDR & 0.146 & 0.52 & 8 & 6.6 & 11.10 & 0.146 & 0.78 & 10 & 7.8 & 1.29 & 30.7 \\
  MCP & CV & 0.155 & 0.16 & 8 & 8.2 & 11.21 & 0.067 & 14 & 14 & 100 & 1.13 & 29.7 \\
  & CV(1se) & 0.180 & 0 & 0 & - & 11.22 & 0.093 & 10 & 10 & 100 & 1.19 & 31.8  \\
  & mFDR & 0.155 & 0.16 & 2 & 8.2 & 11.21 & 0.151 & 0.42 & 5 & 8.4 & 1.29 & 35.9 \\
  \hline
\end{tabular}
\end{table}

Table~\ref{Tab:case-studies} displays results for three potential approaches to choosing $\lam$: the value that minimizes cross-validation error (CV), the value that comes within 1 standard error (SE) of minimizing CV (1se), and the smallest value of $\lam$ satisfying $\mFDR < 10\%$.
From the table, we can see that there appears to be an inherent tradeoff between the optimal prediction and restricting a model to containing few noise features.
In this case, we see that the lasso model with lowest prediction error has an estimated mFDR of 84.6\%, indicating that despite its prediction accuracy, a substantial fraction of the 16 features selected by the model may be noise. Conversely, if we aim to control the mFDR at 10\%, we select only 8 features.  Although we can be more confident that these 8 features are truly related to survival, the model is somewhat less accurate from a prediction standpoint (CV error of 1313 compared to 1308).  The CV (1se) choice $\lam$ aims to select a parsimonious model whose accuracy is comparable with the best model \citep{Hastie2009}. However, in this example the CV (1se) method selects the null model, giving the impression that no genes can be reliably selected.  Note that the $\mFDR < 10\%$ approach is also within 1 SE of the best model, and therefore may also be thought of as striking an appropriate balance between prediction and parsimony in this example.

An alternative to lasso-penalized regression is to use the MCP penalty, which relaxes the degree of shrinkage for large coefficients, thereby allowing a smaller number of features to account for the observed signal.  Our approach indicates that the MCP model minimizing CV error is likely to have an mFDR no higher than 8.2\%; compared to the lasso model that minimized CV error, we can be considerably more confident in the smaller number of features selected by MCP.

MCP strikes an attractive balance in this case between prediction accuracy and false discovery rate.  One reason for this is that two genes, \textit{ZC2HC1A}, and \textit{FAM117A}, have very large effects, exactly the scenario MCP is designed to perform well in. In the lasso model with $\lambda$ selected by cross validation these variables have coefficients of -0.170 and -0.118 respectively, while in the MCP model these variables have coefficients of -0.220, -0.175 despite cross-validation selecting a larger $\lambda$ (more penalization) for MCP.  This illustrates the fact that lasso models can only accommodate large effects by lowering $\lambda$, which comes at the cost of allowing additional noise variables into the model.

For comparison we also analyzed the data using repeated sample splitting with 100 random splits, as advocated by  \citet{Meinshausen2009}. However, this approach was unable to select any genes at a false discovery rate of 10\%, or even at a more liberal rate of 50\%.
As mentioned earlier, the {\tt covTest} package does not (currently) accommodate survival data.

We also applied a large scale univariate testing approach to the Shedden data, fitting a series of Cox regression models adjusting for all clinical covariates and containing a single gene.  After using the Benjamini-Hochberg procedure to control the false discovery rate at 10\%, this approaches selects 803 genes, far more than any of the regression-based approaches.  As discussed in Section~\ref{Sec:sim-comp}, the primary difference between univariate and regression-based methods with respect to false discoveries is that univariate approaches tend to select large numbers of features that are indirectly correlated with the outcome.  Our simulation results would suggest that for these data, univariate testing yields a large number of genes, most of which are only indirectly associated with survival, while the mFDR approach yields a smaller number of genes, most of which are directly associated with survival.

\subsection{Lung cancer status among smokers}

\citet{Spira2007} collected RNA expression data for 22,215 genes from histologically normal bronchial epitheliums of $n = 192$ smokers, of which 102 had developed lung cancer and 90 had not developed lung cancer.
The goal of the study was to identify genes that are indicative of whether or not a smoker has lung cancer.  For our analysis we fit penalized logistic regression models, and compare the results using mFDR, cross validation, and the covariance test to choose $\lam$, as well as the genes selected via sample splitting.   We also compare these model-based approaches to the traditional univariate approach with false discovery rate control.

Table~\ref{Tab:case-studies} shows results for the Spira data at various $\lambda$ values for the lasso and MCP penalty functions.  In contrast with the Shedden case study, for both penalties in this example the mFDR bound is 100\% at the value of $\lam$ selected by cross validation.  In other words, we are unable to provide an upper bound for the mFDR at the value of $\lam$, and one should be cautious about interpreting the selected features as significant.  For example, at $\lam=0.029$, there may very well be some true discoveries among the 71 selected features, but a lasso-penalized logistic regression model could easily select 71 features just by chance at that value of $\lam$ even if none of them were related to the outcome.

In order to arrive at a set of features with a low false discovery rate, we much choose a much smaller model (for the lasso model, 10 features instead of 71).  As in the earlier case study, however, the price of an FDR restriction is a decrease in prediction accuracy.  The prediction error is considerably higher in the $\mFDR < 10\%$ lasso and MCP models than for the models selected by cross-validation.  In the Spira example, no genes have particularly large signals; as a result, the MCP approach is less successful than at finding a model that is attractive from both the mFDR and prediction perspectives than it was in the Shedden example.

As in the Shedden case study, neither the sample splitting nor covariance testing approaches were able to select any genes at a 10\% FDR, nor could any genes be selected by either approach at the considerably more liberal cutoff of 50\%.
Finally, we tested for significant features in a univariate approach, fitting a separate logistic regression model for each gene and controlling the false discovery rate at 10\% using the Benjamini-Hochberg procedure.
This approach identifies 2,833 genes as significant (single-gene tests based on $t$-tests yielded similar results).

Depending on the goals of the analysis and whether indirect associations are of interest, either the univariate or mFDR approach might prove useful here.
However, as in the earlier case study, fully conditional and pathwise conditional approaches to controlling the false discovery rate tend to be so restrictive that they are unable to identify any features in high-dimensional studies.

\section{Discussion}

Controlling the marginal false discovery rate of a penalized likelihood model is a useful way of assessing the reliability of a selected set of features. Unlike other approaches that have a similar goal, such as sample splitting or the covariance test, the mFDR approach uses a less strict definition of a false discovery which does not require conditioning on any other variables and consequently only limits the selection of variables that are noise in an unconditional -- i.e., marginal -- sense.  The simulations in this paper demonstrate that while mFDR is based upon a weaker false discovery definition, when used in combination with a penalized regression model it tends to perform very well at limiting the number of indirect (non-causal) feature selections.


Furthermore, mFDR is far more convenient from the standpoint of computational burden than other approaches. For the Spira data, where $n = 192$ and $p =22,215$, fitting a penalized logistic regression model and then estimating the mFDR for the entire $\lam$ sequence takes only 1.4 seconds.  This is over 400 times faster than sample splitting and the covariance test, each of which took over 9 minutes, and is nearly 30 times faster than even univariate testing, which took over 3 minutes.
Results are similar for the Shedden survival analysis.  The computational efficiency of mFDR makes it a particularly appealing tool during the early stages of an analysis when a number of candidate models are being considered.

A natural question is whether marginal independence is a realistic assumption.  In many applications, complete independence is unlikely to be literally true.  However this is, in a sense, the point: if one selects 20 features and one would expect to select 16 of them even if all of those features were completely independent of the outcome, this calls into question the confidence with which we can claim associations between any given feature and the outcome.  A relevant analogy is the testing of a point null hypothesis -- an effect of precisely zero may be unrealistic, but if that hypothesis cannot be rejected there is cause for skepticism.

The mFDR control procedure proposed here is currently implemented in the R package {\tt ncvreg} \citep{Breheny2011} using the {\tt mFDR} function. The function accepts an ncvreg fitted model and calculates the expected number of false discoveries, as well as the mFDR for each value of the $\lambda$ sequence used in fitting the model.  The package also provides a plotting method, which produces plots like those in Figure~\ref{Fig:Shedden}.  The {\tt ncvreg} package accommodates lasso, MCP, SCAD, elastic net, and MNet \citep{Huang2016} penalties and provides a convenient model summary measure for penalized linear, logistic, and Cox regression models.

In summary, this manuscript has demonstrated how to extend control over the marginal false discovery rate for penalized regression models to any general likelihood-based loss function.  Controlling the mFDR is much faster, considerably less restrictive, and is conveniently available in an open-source software package.  The marginal false discovery rate is an easily interpreted and broadly useful approach to inference concerning the reliability of selected features in a penalized regression model, capable of being generalized to a wide variety of penalty functions and modeling frameworks.

\bigskip

\noindent {{\bf Supporting information}}

\noindent Data and source code to reproduce all results and figures are available at 

\noindent\url{http://github.com/pbreheny/mfdr-likelihood-paper}.

\bibliographystyle{ims}

\end{document}